%% file: main.tex
\documentclass[11pt]{article}

\usepackage[margin=1in]{geometry}
\usepackage{amsmath, amssymb, amsthm}
\usepackage{booktabs}
\usepackage{longtable}
\usepackage{hyperref}
\usepackage{graphicx}
\usepackage{microtype}
\usepackage{listings}
\usepackage{tikz}
\usepackage{url}
\usepackage{enumitem}
\usepackage{placeins}
\usepackage{float}
\usepackage{array}

\setlist[itemize]{noitemsep, topsep=2pt}
\hypersetup{
  colorlinks=true,
  linkcolor=blue,
  citecolor=blue,
  urlcolor=blue
}

\newtheorem{theorem}{Theorem}
\newtheorem{lemma}{Lemma}
\newtheorem{definition}{Definition}

\title{Symfrog-512: High-Capacity Sponge-Based AEAD Cipher (1024-bit State)}
\author{
Victor Duarte Melo\\
Independent Researcher\\
Erd\H{o}s Number 4... 
}
\date{}

\begin{document}
\maketitle

\begin{abstract}
SymFrog-512 is an authenticated encryption with associated data (AEAD) scheme built from a custom 1024-bit permutation, denoted \textsf{P1024-v2}, and a duplex sponge mode with rate $r = 512$ bits and capacity $c = 512$ bits. The design combines a large capacity with a 256-bit authentication tag and a 256-bit nonce, targeting conservative generic bounds in the ideal permutation model. The reference implementation supports either a raw 1024-bit key or a passphrase-derived key via Argon2id (libsodium), and it defines an encrypted file format with a dedicated keyed header authentication tag for early reject and robust parsing. This paper provides a complete specification aligned with the current reference code, a security analysis in the ideal permutation model using standard sponge and duplex arguments, and empirical measurements (benchmarks and diffusion checks) with regression artifacts and checksums.

Reference implementation: \url{https://github.com/victormeloasm/symfrog512}. Reported benchmark run: 435.1 ns per \textsf{P1024-v2} permutation (200,000 iterations) and 131.7 MiB/s for the AEAD core encrypt path on a 64 MiB buffer (excluding file I/O and key derivation).
\end{abstract}

\section{Introduction}

Permutation-based cryptography has a long history, but the modern sponge framework gave it a unified and implementer-friendly structure. In this paradigm, one designs a fixed-width permutation and then builds multiple primitives by defining how messages are absorbed into, and squeezed out of, an internal state. Keccak and SHA-3 popularized this split between a public permutation and a mode of use \cite{fips202, bertoni_sponge}. Duplexing generalizes the sponge to interactive settings, enabling single-pass authenticated encryption and streaming constructions \cite{bertoni_duplex}.

SymFrog-512 is an experimental AEAD design in this spirit. It defines a 1024-bit state permutation, \textsf{P1024-v2}, and uses a duplex sponge mode with a 512-bit rate and a 512-bit capacity. The design aims for conservative generic bounds (large capacity, long tag) while remaining implementable with simple 64-bit operations and a clear file format.

This document has three goals:
(i) provide a precise specification that matches the current public reference implementation,
(ii) provide a transparent security discussion grounded in the ideal permutation model and standard sponge bounds,
(iii) provide portable artifacts: benchmark summaries, diffusion sanity checks, and regression files with SHA-256 checksums.

\paragraph{Non-goals.} This paper does not claim that \textsf{P1024-v2} has been proven secure against all dedicated cryptanalysis. The security statements are reductions in the ideal permutation model, which is the standard baseline for sponge and duplex modes. Public cryptanalysis is encouraged.

\section{Related Work}

The sponge function and the duplex construction provide generic security bounds and design guidance for permutation-based schemes \cite{bertoni_sponge, bertoni_duplex}. NIST standard FIPS 202 specifies SHA-3 and SHAKE based on the Keccak permutation \cite{fips202}. Many modern AEADs either use block ciphers (AES-GCM), stream ciphers and MACs (ChaCha20-Poly1305), or dedicated permutations (Ascon, Xoodyak, and others). SymFrog-512 follows the permutation-based path and uses a duplex interface with explicit domain separation bytes.

Argon2id is a password hashing and key derivation function designed to be memory-hard and to limit massively parallel attacks on low-entropy passwords \cite{argon2}. The SymFrog-512 reference implementation optionally uses Argon2id via libsodium \cite{libsodium} for passphrase-derived keys.

\section{Parameters, Notation, and Conventions}

\subsection{Bit and byte conventions}

All integers are unsigned. The permutation state is a vector of 16 little-endian 64-bit words. For a byte string $X$, $\lvert X\rvert$ is its length in bytes. Concatenation is written $X \,\|\, Y$. All arithmetic on 64-bit words is modulo $2^{64}$. Bit shifts and rotations are on 64-bit words, with $\textsf{ROTL}_k(x)$ denoting rotation left by $k$ bits.

\subsection{Fixed parameters}

The SymFrog-512 instantiation uses:
\begin{align*}
\text{State size} &= 1024 \text{ bits} = 128 \text{ bytes} \\
r &= 512 \text{ bits} = 64 \text{ bytes} \\
c &= 512 \text{ bits} = 64 \text{ bytes} \\
\text{Tag length } t &= 256 \text{ bits} = 32 \text{ bytes} \\
\text{Nonce length} &= 256 \text{ bits} = 32 \text{ bytes} \\
\text{Key length} &= 1024 \text{ bits} = 128 \text{ bytes} \\
\text{Rounds} &= 24
\end{align*}

The state is partitioned as $S = (S_R, S_C)$ where $S_R$ is the first 64 bytes (rate) and $S_C$ is the last 64 bytes (capacity).

\subsection{Padding on the rate}

For a final partial block of length $\ell$ bytes (with $0 \le \ell < 64$), SymFrog XORs the $\ell$ bytes into the rate, then XORs $0x80$ at byte position $\ell$, and XORs $0x01$ into the last rate byte (byte position 63). This is the common $10^*1$ padding pattern used in sponge designs.

\section{The \textsf{P1024-v2} Permutation}

\subsection{State layout}

The permutation state is $S[0],\ldots,S[15]$, 16 words of 64 bits. Words $S[0..7]$ correspond to the rate and $S[8..15]$ correspond to the capacity. Round constants are injected into the capacity words.

\subsection{Round constants}

Round constants are deterministically derived via SHAKE256:
\[
RC[r] = \textsf{SHAKE256}(\textsf{``SymFrog-rc-v1''} \,\|\, \textsf{LE32}(r))[0..63]
\]
The 64 output bytes are parsed as 8 little-endian 64-bit words and XORed into $S[8]$ through $S[15]$ at the start of round $r$.

\subsection{Round function}

A single round consists of the following layers, applied in order:
\begin{enumerate}[leftmargin=2em]
\item \textbf{AddRoundConstants:} XOR the 8-word constant $RC[r]$ into $S[8..15]$.
\item \textbf{Mixer:} for $i = 0$ to $7$, set $S[i] \gets S[i] \oplus S[i+8]$.
\item \textbf{Chi nonlinearity:} apply a 4-word chi to each group $(S[4g],S[4g+1],S[4g+2],S[4g+3])$.
\item \textbf{Kick layer:} two phases of 64-bit multiplicative mixing using an oddness mask.
\item \textbf{Rotate and shuffle:} rotate each word and permute word positions via a fixed P-box.
\end{enumerate}

\paragraph{Chi layer.}
For each group $(x_0,x_1,x_2,x_3)$:
\begin{align*}
x_0 &\gets x_0 \oplus (\neg x_1 \wedge x_2)\\
x_1 &\gets x_1 \oplus (\neg x_2 \wedge x_3)\\
x_2 &\gets x_2 \oplus (\neg x_3 \wedge x_0)\\
x_3 &\gets x_3 \oplus (\neg x_0 \wedge x_1)
\end{align*}

\paragraph{Kick layer.}
Let $C = \texttt{0x9E3779B97F4A7C15}$. Phase A updates odd indices from even indices: for $i \in \{0,2,\ldots,14\}$, define $m = S[i] \,|\, 1$ and set $S[i+1] \gets S[i+1] \oplus (S[i]\cdot m)$. Phase B updates even indices from odd indices: for $i \in \{1,3,\ldots,15\}$, define $m = S[i] \,|\, 1$, compute $k = S[i]\cdot (m \oplus C)$, and set $S[(i+1)\bmod 16] \gets S[(i+1)\bmod 16] \oplus \textsf{ROTL}_{23}(k)$.

\paragraph{Rotate and shuffle.}
Each word is rotated left by 19 bits if its index is even, and by 61 bits if its index is odd. Then apply the 16-word P-box:
\[
\pi = [0,13,10,7,4,1,14,11,8,5,2,15,12,9,6,3].
\]
After shuffling, the new state is $(S[\pi(0)],\ldots,S[\pi(15)])$.

\subsection{Rationale and expected properties}

The round function combines Boolean nonlinearity (chi), multiplication-based mixing (kick), and inter-word diffusion (mixer plus shuffle). The chi step is a classic way to introduce nonlinearity using only bitwise operations, and it appears in Keccak as well \cite{fips202}. The kick layer injects data-dependent carries through multiplication, which tends to raise algebraic complexity. The rotate and shuffle layer spreads dependencies across words.

These design elements are motivated by engineering intuition and by patterns used in other permutation designs, but they do not constitute a full cryptanalytic proof. The AEAD security discussion below therefore separates \emph{mode security} (ideal permutation model bounds) from \emph{permutation security} (open to public analysis).

\section{SymFrog-512 AEAD Mode}

\subsection{Domain separation}

SymFrog separates phases by XORing a one-byte domain constant into the least significant byte of $S[15]$ before each permutation call:
\[
DS_{AD} = \texttt{0xA0},\quad DS_{CT} = \texttt{0xC0},\quad DS_{TAG} = \texttt{0xF0}.
\]
The header authentication mechanism uses:
\[
DS_{HDR} = \texttt{0xB0},\quad DS_{HDRTAG} = \texttt{0xB1}.
\]

\subsection{Initialization}

Given a 1024-bit key $K$ and a 256-bit nonce $N$, initialization is:
\begin{enumerate}[leftmargin=2em]
\item Parse $K$ as 16 little-endian 64-bit words and set $S[i] \gets K_i$ for $i=0$ to $15$.
\item XOR the nonce into the last 4 words:
$S[12] \gets S[12]\oplus N_0$, $S[13] \gets S[13]\oplus N_1$, $S[14] \gets S[14]\oplus N_2$, $S[15] \gets S[15]\oplus N_3$.
\item XOR a fixed ASCII identifier into $S[8]$ through $S[11]$ as in the reference code (\texttt{SYMFROG-512-AEAD-v1} and a version word).
\item Apply the permutation once: $S \gets \textsf{P1024-v2}(S)$.
\end{enumerate}

\subsection{Associated data absorption}

Associated data $AD$ is absorbed in 64-byte blocks. For each full block $A_j$:
\[
S_R \gets S_R \oplus A_j,\quad S[15] \gets S[15] \oplus DS_{AD},\quad S \gets \textsf{P}(S).
\]
For the final partial block, apply the $10^*1$ padding on the rate and perform one permutation call with the same $DS_{AD}$.

Even when $AD$ is empty, the reference implementation performs one padded AD absorption step, which guarantees a consistent transcript separation between initialization and ciphertext processing.

\subsection{Output transform}
\label{sec:out}
Symfrog-512 does not expose raw state lanes directly. Instead, each duplex squeeze step extracts a 512-bit block by mixing rate and capacity words and applying a non-linear 64-bit finalizer on every lane.

Let $S=(S_0,\ldots,S_{15}) \in (\mathbb{Z}_{2^{64}})^{16}$. For $i\in\{0,\ldots,7\}$ define
\[
X_i = S_i \oplus \mathrm{ROTL}(S_{8+i},17) \oplus \mathrm{ROTL}(S_{8+((i+3)\bmod 8)},41) \oplus \gamma\cdot(i+1),
\]
where $\gamma=\texttt{0x9E3779B97F4A7C15}$ and all operations are modulo $2^{64}$. The extracted 64-bit word is
\[
\mathrm{Out}(S)[i] = \mathrm{SM}(X_i),
\]
where $\mathrm{SM}$ is the SplitMix64 finalizer (all arithmetic modulo $2^{64}$):
\[
\begin{aligned}
x &\leftarrow x \oplus (x \gg 30), & x &\leftarrow x\cdot \alpha,\\
x &\leftarrow x \oplus (x \gg 27), & x &\leftarrow x\cdot \beta,\\
x &\leftarrow x \oplus (x \gg 31), &&
\end{aligned}
\]
with $\alpha=\texttt{0xBF58476D1CE4E5B9}$ and $\beta=\texttt{0x94D049BB133111EB}$. The 512-bit output block is the concatenation of the eight words $\mathrm{Out}(S)[0..7]$ serialized in little-endian order.

The extraction uses both halves of the state, which improves diffusion from the capacity and breaks simple ``read-the-rate'' patterns often used in sponge constructions.
\subsection{Encryption and decryption}

Let the plaintext be $P$. Encryption proceeds in 64-byte blocks. For each full block:
\begin{enumerate}[leftmargin=2em]
\item Compute keystream $Z \gets \mathsf{Out}(S)$.
\item Output ciphertext $C \gets P \oplus Z$.
\item Absorb ciphertext into the rate: $S_R \gets S_R \oplus C$.
\item Domain separate and permute: $S[15] \gets S[15] \oplus DS_{CT}$, then $S \gets \textsf{P}(S)$.
\end{enumerate}
For the final partial block, ciphertext is produced by XOR with the corresponding keystream prefix; then the ciphertext tail is absorbed into the state with $10^*1$ padding on the rate, followed by one permutation call with $DS_{CT}$.

Decryption is identical except that plaintext is computed as $P \gets C \oplus \mathsf{Out}(S)$ while the state always absorbs ciphertext.

\subsection{Final tag}

After processing ciphertext, SymFrog finalizes by:
\[
S[15] \gets S[15]\oplus DS_{TAG},\quad S \gets \textsf{P}(S),\quad T \gets \mathsf{Out}(S)[0..31].
\]
The 32-byte tag $T$ is appended to the ciphertext stream.

\section{Encrypted File Format and Header Authentication}

\subsection{File layout}

A SymFrog encrypted file begins with a 152-byte header, followed by ciphertext bytes, followed by a 32-byte final tag. Total file size is \texttt{HEADER\_BYTES} + $\ell$ + \texttt{TAG\_BYTES} bytes, where $\ell$ is the plaintext length.

\begin{center}
\small
\begin{tabular}{l l l}
\toprule
Field & Size & Description \\
\midrule
magic & 8 & ASCII \texttt{SYMFROG1} \\
version & 4 & currently 1 \\
flags & 4 & bit 0 indicates Argon2id-derived key \\
salt & 32 & Argon2id salt (if derived key) \\
nonce & 32 & nonce used by AEAD \\
ct\_len & 8 & ciphertext length in bytes \\
reserved & 32 & reserved for future use \\
header\_tag & 32 & keyed header authentication tag \\
\bottomrule
\end{tabular}
\end{center}

\subsection{Key derivation}

If \texttt{FLAG\_KEY\_DERIVED} is set, the implementation derives a 1024-bit key from a passphrase using Argon2id v1.3 via libsodium \cite{argon2, libsodium}. The 32-byte \texttt{salt} field in the header is used as the Argon2id salt. If the flag is not set, a raw 1024-bit key is used directly.

\subsection{Header tag and early reject}

Before decrypting ciphertext, the implementation verifies \texttt{header\_tag}. This binds header fields plus external associated data to the key and nonce, and it prevents wasting time decrypting with the wrong key.

Let $H_0$ be the header with \texttt{header\_tag} bytes set to zero. The header tag is computed by running a keyed duplex transcript with domain bytes $DS_{HDR}$ and $DS_{HDRTAG}$:
\[
\texttt{header\_tag} = \mathsf{Tag32}\bigl(\mathsf{Init}(K,N);\ \mathsf{Absorb}(\textsf{``SYMFROG-HDRTAG-v1''}\,\|\,H_0\,\|\,AD, DS_{HDR});\ \mathsf{Finalize}(DS_{HDRTAG})\bigr).
\]

\subsection{Robustness against truncation}

For decryption, the implementation determines ciphertext length from the actual file size and checks that the last 32 bytes exist for the tag. Tag comparisons are constant-time. Output is written safely to avoid corrupting inputs.

\section{Design Rationale}

\subsection{Capacity, rate, and conservative generic bounds}

In sponge and duplex analyses, the capacity $c$ controls collision-based distinguishing bounds. A common generic term is about $q^2 / 2^c$ where $q$ counts permutation calls. Choosing $c=512$ makes this negligible for realistic file encryption workloads. The rate $r=512$ processes 64 bytes per permutation call, yielding a direct throughput to permutation-call relation.

\subsection{Long tag and long nonce}

The tag length $t=256$ yields a direct guessing probability $2^{-256}$ per forgery attempt. The 256-bit nonce is large enough to make random collisions across large collections of files extremely unlikely.

\subsection{Ciphertext absorption}

Absorbing ciphertext into the state enables symmetric transcripts for encryption and decryption and is consistent with standard duplex AEAD patterns.

\subsection{Output transform}
The duplex mode uses the extraction function $\mathrm{Out}(S)$ defined in Section~\ref{sec:out} to generate each 512-bit keystream block.
\section{Security Analysis in the Ideal Permutation Model}

This section states standard security bounds under the assumption that \textsf{P1024-v2} behaves like a uniformly random permutation on 1024-bit strings. This is the common baseline for sponge analyses \cite{bertoni_sponge, bertoni_duplex}.

\subsection{Security notions}

We consider standard AEAD goals:
\begin{itemize}
\item \textbf{Confidentiality (IND-CPA):} ciphertexts should not reveal which of two equal-length plaintexts was encrypted.
\item \textbf{Authenticity (INT-CTXT):} it should be infeasible to produce a new valid ciphertext and tag for some associated data, except with negligible probability.
\end{itemize}

\subsection{Capacity collisions}

\begin{definition}[Capacity collision event]
Let $S^{(i)}=(R^{(i)},C^{(i)})$ denote the internal state just before the $i$-th permutation call, where $C^{(i)}$ is the $c$-bit capacity portion. A capacity collision occurs if there exist $i<j$ with $C^{(i)}=C^{(j)}$.
\end{definition}

\begin{lemma}[Birthday bound]
If the capacity values are uniform and independent until the first collision, then
\[
\Pr[\textsf{CapCollision}] \le \frac{q(q-1)}{2^{c+1}},
\]
where $q$ is the number of permutation calls.
\end{lemma}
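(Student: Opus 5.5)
The plan is a union bound over pairs of calls, using the hypothesis of the lemma exactly as stated. Index the $q$ permutation calls $1,\ldots,q$ and let $C^{(1)},\ldots,C^{(q)}$ be the capacity values just before each call, each in $\{0,1\}^c$. The collision event $\textsf{CapCollision}$ is the union over the $\binom{q}{2}=q(q-1)/2$ unordered pairs $i<j$ of the events $E_{ij}=\{C^{(i)}=C^{(j)}\}$. It therefore suffices to show $\Pr[E_{ij}]\le 2^{-c}$ for each pair, or a slightly weaker statement that still sums to the same total. The bound then follows from $\binom{q}{2}\cdot 2^{-c}=q(q-1)/2^{c+1}$.

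Pairwise union bounds are awkward when independence holds only ``until the first collision,'' so I will instead do a sequential decomposition. Let $F_j$ be the event that the first collision occurs at index $j$, meaning $C^{(j)}$ equals some earlier $C^{(i)}$ while $C^{(1)},\ldots,C^{(j-1)}$ are pairwise distinct. Then $\textsf{CapCollision}=\bigsqcup_{j=2}^{q}F_j$. Condition on any history in which $C^{(1)},\ldots,C^{(j-1)}$ are distinct. By hypothesis, $C^{(j)}$ is then uniform on $\{0,1\}^c$ and independent of that history. It hits the set of at most $j-1$ earlier values with probability at most $(j-1)/2^c$. Averaging over histories gives $\Pr[F_j]\le (j-1)/2^c$. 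Summing, $\sum_{j=2}^{q}(j-1)/2^c=q(q-1)/2^{c+1}$.

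The main obstacle is interpretive rather than computational: making precise what ``uniform and independent until the first collision'' means. I will formalize it as the conditional statement used above. Conditioned on the transcript up to call $j-1$ and on the absence of an earlier collision, $C^{(j)}$ is uniformly distributed. This is exactly what the standard ideal-permutation argument provides, via lazy sampling of fresh outputs, up to a negligible correction from sampling without replacement. I would remark that with sampling without replacement the per-step probability is at most $(j-1)/(2^c-j+1)$ rather than $(j-1)/2^c$, but the lemma explicitly assumes exact uniformity, so that correction is outside its scope. Under the stated hypothesis the chain of inequalities above is tight and complete.
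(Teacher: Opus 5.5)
Your proof is correct. The paper states this lemma without any proof; the form of the bound, $q(q-1)/2^{c+1}=\binom{q}{2}2^{-c}$, points to the textbook pairwise union bound. Your first-collision decomposition $\textsf{CapCollision}=\bigsqcup_{j=2}^{q}F_j$ with $\Pr[F_j]\le (j-1)2^{-c}$ is the standard way to make that bound rigorous when uniformity holds only up to the first collision. It never needs $\Pr[C^{(i)}=C^{(j)}]$ for pairs after a collision, where the hypothesis says nothing.

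One wording nit: the chain is rigorous but not tight in the sense of equality. You have $\Pr[F_j]=(j-1)2^{-c}\cdot\Pr[C^{(1)},\ldots,C^{(j-1)}\text{ pairwise distinct}]$, and you drop the last factor. The exact value under your hypothesis is $1-\prod_{k=1}^{q-1}(1-k\,2^{-c})$, which is strictly smaller than the stated bound once $q\ge 3$.
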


\subsection{Confidentiality bound}

\begin{theorem}[IND-CPA bound]
Let $\Pi$ be SymFrog-512 instantiated with an ideal permutation. For any adversary making at most $q$ permutation calls in total,
\[
\mathrm{Adv}^{\mathrm{ind\mbox{-}cpa}}_{\Pi} \le \frac{q(q-1)}{2^{c+1}}.
\]
\end{theorem}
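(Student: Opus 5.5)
The plan is a standard game-hopping argument in the ideal permutation model, with the capacity-collision event of the Definition as the single ``bad'' event, and the Birthday bound Lemma supplying the final probability. I would fix a nonce-respecting adversary $\mathcal{A}$ with oracle access to encryption and to $\textsf{P}^{\pm1}$. I would let $q$ count \emph{all} evaluations of $\textsf{P}$: those made by the construction on $\mathcal{A}$'s behalf and those made directly by $\mathcal{A}$. I would then compare the real game against an ideal game in which every ciphertext block (and tag) is replaced by fresh uniform bits of the same length. The goal is to show that the two games are identical until a capacity collision occurs among these $q$ calls. The fundamental lemma of game playing then gives $\mathrm{Adv}^{\mathrm{ind\mbox{-}cpa}}_{\Pi} \le \Pr[\textsf{CapCollision}]$, and the Birthday bound Lemma finishes the argument.

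\textbf{Step 1: fresh inputs give fresh, uniform states.} In the real game I would sample $\textsf{P}$ lazily. I would show that, as long as no capacity collision has occurred, every construction call to $\textsf{P}$ is on a \emph{new} input. The first call after initialization is on a state containing the secret 1024-bit key, so it is fresh except with probability about $q/2^{1024}$ of $\mathcal{A}$ guessing it. This term is absorbed or treated as negligible; I would flag that the stated bound tacitly drops it. Later calls have capacity equal to the capacity of a previous lazily-sampled output. Two distinct transcripts can reach the same permutation input only if their capacity halves agree, which is a capacity collision. The same holds for a construction input that coincides with one of $\mathcal{A}$'s direct queries. Hence each answer is a fresh uniform 1024-bit value, up to the negligible permutation-versus-function switch of order $q^2/2^{1025}$. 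In particular, the capacity values before each call are uniform and independent until the first collision, which is exactly the hypothesis of the Birthday bound Lemma.

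\textbf{Step 2: fresh states give uniform keystream.} I would argue that $\mathsf{Out}$ maps a uniform 1024-bit state to a uniform 512-bit block. For fixed capacity words, the map $S_i \mapsto X_i$ is a translation, hence a bijection. $\mathrm{SM}$ is a bijection on $\mathbb{Z}_{2^{64}}$ because xorshifts are invertible and $\alpha,\beta$ are odd. So each keystream block $Z$, including the truncation used for the final partial block and for the tag, is uniform given everything $\mathcal{A}$ has seen. The ciphertext $P\oplus Z$ is therefore uniform regardless of $P$, and the real and ideal games coincide unless $\textsf{CapCollision}$ occurs. Nonce-respect ensures distinct queries start from distinct initial inputs, so no two queries are forced onto a shared state without a collision.

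\textbf{Main obstacle.} I expect the delicate step to be the rigorous justification of ``uniform and independent until the first collision.'' This needs careful bookkeeping of how adversarial primitive queries interact with construction states. The adversary knows the rate and can choose inverse queries, so I would define the bad event to include a capacity match between a construction state and any primitive query or response, and count those toward $q$. I would try first to phrase everything via a single lazily-sampled transcript of all $q$ calls and apply the Lemma to that list. I would also state explicitly that the additive key-guessing and PRP/PRF terms are negligible at these parameters and are omitted from the displayed bound.
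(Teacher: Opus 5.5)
Your plan is the paper's argument: hop to a game with uniform keystream, show the two games agree until a capacity collision, and finish with the Birthday-bound Lemma. You also fill in details the sketch leaves implicit, notably nonce-respect and the fact that $\mathsf{Out}$ is a bijection of the rate for each fixed capacity, which both makes every keystream block uniform and keeps the capacity uniform after the block is revealed. One caution: the Lemma cannot be applied literally to the list of all $q$ calls, because the adversary picks its own capacities (two primitive queries with equal capacity would trigger the event with certainty), so union-bound only over pairs involving at least one construction call, each colliding with probability about $2^{-c}$ or less and at most $\binom{q}{2}$ in number; that event already contains key guessing, so only the lower-order without-replacement (permutation-versus-function) correction genuinely falls outside the displayed bound.
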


\begin{proof}[Proof sketch]
The standard game-hopping argument replaces the real transcript with a simulated transcript that returns uniformly random keystream blocks, except when a capacity collision occurs. If no collision occurs, each keystream block is information-theoretically independent of the adversary's view, and ciphertext is a one-time pad on the chosen plaintext. The statistical distance between the real and ideal games is bounded by the collision probability.
\end{proof}

\subsection{Authenticity bound}

\begin{theorem}[INT-CTXT bound]
Let $\Pi$ be SymFrog-512 instantiated with an ideal permutation. For any adversary making at most $q$ permutation calls and then outputting one forgery attempt,
\[
\mathrm{Adv}^{\mathrm{int\mbox{-}ctxt}}_{\Pi} \le \frac{q(q-1)}{2^{c+1}} + 2^{-t},
\]
where $t=256$ is the tag length.
\end{theorem}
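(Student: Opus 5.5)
The plan is a two-hop game argument that mirrors the proof sketch of the IND-CPA bound and uses the Birthday bound lemma as the only probabilistic input.

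\textbf{Game 0.} Let $\mathsf{G}_0$ be the real INT-CTXT game with $\textsf{P1024-v2}$ replaced by a uniformly random permutation $\mathsf{P}$. The adversary may query encryption, $\mathsf{P}$ and $\mathsf{P}^{-1}$, with at most $q$ calls in total, counting the calls the oracles make internally. At the end it outputs one tuple $(N^\ast, AD^\ast, C^\ast, T^\ast)$ that is not the result of any encryption query.

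\textbf{Game 1.} In $\mathsf{G}_1$ the permutation is lazily sampled, and we declare the event $\mathsf{Bad}$ when two of the $q$ permutation inputs (including those made while verifying the forgery) share the same capacity value $C^{(i)}=C^{(j)}$. Until $\mathsf{Bad}$ occurs, every fresh capacity input to $\mathsf{P}$ yields an output whose full $1024$ bits, and in particular whose capacity half, are uniform and independent of everything seen so far. This is exactly the hypothesis of the Birthday bound lemma, so $\Pr[\mathsf{Bad}] \le q(q-1)/2^{c+1}$. The games are identical until $\mathsf{Bad}$, so the fundamental lemma of game playing bounds the difference in success probability by this term.

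\textbf{Fresh tag state.} Assume $\mathsf{Bad}$ does not occur and look at the verification transcript for $(N^\ast, AD^\ast, C^\ast)$. The domain bytes $DS_{AD},DS_{CT},DS_{TAG}$ together with the $10^\ast1$ padding make the encoding injective and prefix-free: distinct $(N,AD,C)$ yield distinct sequences of permutation inputs, and the tag-producing call never matches an absorption call. Two cases arise.
\begin{enumerate}[leftmargin=2em]
\item If the forgery shares $(N^\ast,AD^\ast,C^\ast)$ with an encryption query, it differs only in $T^\ast$. The tag is then deterministic and already known, so the attempt fails with certainty.
\item Otherwise, consider the first block where the forged transcript diverges from every queried one. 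The input to the next permutation call differs from all previous inputs. Since no capacity collision occurs, the capacity part is new, and by induction so is the final input to the $DS_{TAG}$ call.
\end{enumerate}
In the second case, the output $S$ of that final call is uniform on the values not yet used. Here I would argue that $\mathsf{Out}$ is a bijection on the rate-plus-mixed-capacity words used for the tag: SplitMix64 is invertible, and $X_i$ is an invertible affine map of $S_i$ once the capacity is fixed. Hence $T=\mathsf{Out}(S)[0..31]$ is uniform over $\{0,1\}^{t}$, up to a without-replacement correction that can be absorbed into the collision term. The probability that $T = T^\ast$ is therefore at most $2^{-t}$, and summing the two contributions gives the stated bound.

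\textbf{Main obstacle.} The hardest step is making rigorous that freshness propagates along the transcript. A collision on the full state, or an adversary who uses $\mathsf{P}^{-1}$ to steer a capacity value, must be covered by $\mathsf{Bad}$. Inverse queries produce uniform inputs, so the same birthday accounting applies to them. The delicate point is showing that the final call's input is new even when the forged ciphertext shares long prefixes with queried ones.

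I would handle this with a prefix-tree argument in the style of the duplex indifferentiability proofs \cite{bertoni_duplex}. Each node of the tree is a permutation input, and absent $\mathsf{Bad}$ the map from paths to capacity values is injective. I would also note explicitly that a nonce-respecting adversary is assumed, so that the keystream, and hence the ciphertext blocks, is independent across queries.
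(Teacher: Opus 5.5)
Your outline is the paper's sketch made explicit: rule out capacity collisions, then guess a $t$-bit tag. Making it explicit exposes a gap that the sketch glosses over. With $\mathsf{Bad}$ defined as ``two of the $q$ permutation inputs share a capacity value,'' the bound $\Pr[\mathsf{Bad}]\le q(q-1)/2^{c+1}$ is false.

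The birthday lemma assumes uniform, independent capacities, and that fails for forward $\mathsf{P}$-queries. Their inputs are chosen by the adversary, who can make $\Pr[\mathsf{Bad}]=1$ simply by querying two inputs with equal capacity. Your remark about $\mathsf{P}^{-1}$ does not cover this.

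More seriously, case 2 invokes the event exactly where it cannot hold. Suppose $N^\ast$ is a queried nonce and the forgery shares a prefix with that query, say by flipping one bit of the last ciphertext block. Both transcripts reach the same state $S$ and then absorb different data. The two divergent calls therefore have inputs $(S_R\oplus C_j,\,S_C\oplus DS_{CT})$ and $(S_R\oplus C^\ast_j,\,S_C\oplus DS_{CT})$, with identical capacities. So ``no capacity collision, hence the capacity part is new'' is false at the divergence point, and this trivial adversary triggers your $\mathsf{Bad}$ with probability $1$. For the same reason, paths in your prefix tree do not map injectively to input capacities: siblings share one.

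The fix is to distinguish a new input from a new capacity.
\begin{itemize}
\item Merge calls with identical inputs into one tree node. At the divergence node, use only that its full $1024$-bit input is new (it differs in the rate or in the domain byte), so its output is uniform.
\item Define $\mathsf{Bad}$ as a capacity collision between two construction nodes with different parents, or between a construction node and a primitive query (forward input or inverse answer).
\item Each such event involves a capacity that is uniform and hidden from the adversary. Indeed $X_i=S_i\oplus g_i(S_C)\oplus\gamma\cdot(i+1)$ for a fixed function $g_i$, and $S_R$ is uniform, so $\mathsf{Out}(S)$ is independent of $S_C$.
\item With $q_p$ primitive queries and $\sigma$ construction calls there are at most $\binom{\sigma}{2}+q_p\sigma\le\binom{q}{2}$ relevant pairs, each costing about $2^{-c}$. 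This recovers the stated term up to without-replacement factors.
\end{itemize}
Absent this event, every node after the divergence point has a new capacity, hence a new input. The $DS_{TAG}$ call is then fresh, and your bijectivity argument for $\mathsf{Out}$ gives $2^{-t}$.

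Finally, make explicit the padding convention your injectivity claim needs: a padded, possibly empty, final block must be absorbed even for block-aligned lengths. Otherwise a $63$-byte tail padded with $10^\ast1$ equals the full block obtained by appending \texttt{0x81}, and appending that byte to $AD$ gives a forgery with probability $1$.
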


\begin{proof}[Proof sketch]
Condition on the event that no capacity collision occurs, which is bounded by the birthday term. Without collisions, the final tag is a deterministic function of a hidden internal state that remains information-theoretically unpredictable from the adversary's view, so the remaining success probability is at most the probability of guessing the $t$-bit tag.
\end{proof}

\subsection{Header tag}

The header tag uses separate domain bytes and processes a disjoint transcript (header fields plus associated data). In the ideal permutation model, forging the header tag without collisions requires guessing 256 bits, yielding the same $2^{-t}$ term.

\subsection{Nonce reuse}

The construction is not intended to be nonce-misuse resistant. Reusing a nonce with the same key can reveal XOR relations between plaintext blocks, as with other stream-like AEADs. The file format stores the nonce explicitly to reduce operational errors.

\section{Empirical Evaluation}

\subsection{Benchmarks}

Figure \ref{fig:benchperm} shows the permutation latency from the provided benchmark run. Figure \ref{fig:benchaead} shows the measured throughput of the AEAD core encrypt path on a 64 MiB buffer, excluding file I/O and key derivation.

\begin{figure}[t]
\centering
\includegraphics[width=0.78\linewidth]{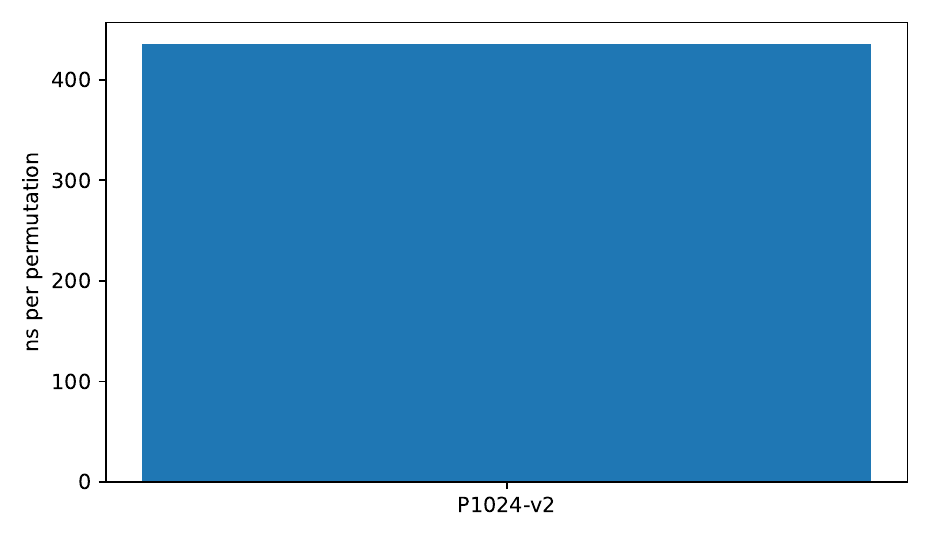}
\caption{Permutation benchmark: 435.1 ns per \textsf{P1024-v2} call (200,000 iterations).}
\label{fig:benchperm}
\end{figure}

\begin{figure}[t]
\centering
\includegraphics[width=0.78\linewidth]{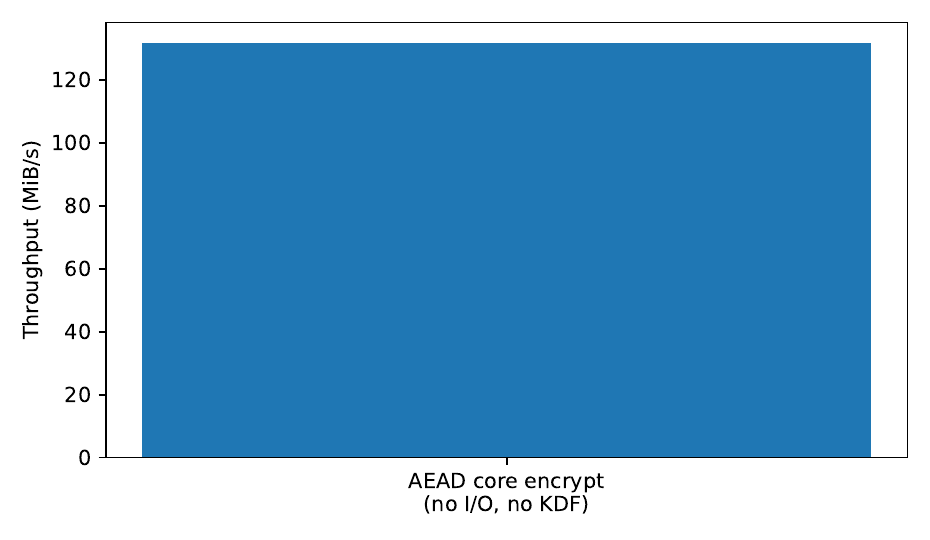}
\caption{AEAD core benchmark: 131.7 MiB/s for a 64 MiB buffer (excluding I/O and KDF).}
\label{fig:benchaead}
\end{figure}

\subsection{Avalanche and diffusion sanity check}

A diffusion sanity check measures how quickly a single-bit input difference spreads across the 1024-bit state under repeated rounds. For each trial, a random 1024-bit state $S$ is sampled, one random bit is flipped to form $S'$, and both states are iterated for $r$ rounds. The Hamming distance is recorded after each round count.

Figure \ref{fig:avalanche} shows the mean, minimum, and maximum Hamming distance across 400 trials. In this experiment, the mean approaches the random-like expectation of 512 bits after about 4 rounds, and at 24 rounds the mean is approximately 512 with a standard deviation on the order of 16 bits.

\begin{figure}[t]
\centering
\includegraphics[width=0.92\linewidth]{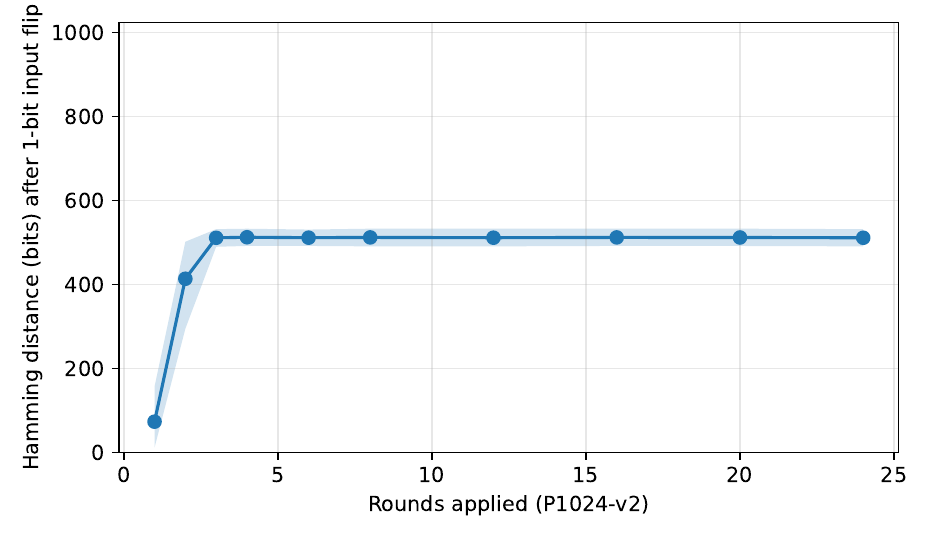}
\caption{Avalanche measurement for \textsf{P1024-v2}. This is an empirical sanity check, not a proof of security.}
\label{fig:avalanche}
\end{figure}

\subsection{Test vectors}

The reference implementation includes a \texttt{--test-all} mode that exercises encryption and decryption for multiple plaintext lengths. The produced files are included in the companion archive. Table \ref{tab:testvectors} lists SHA-256 checksums for plaintext and ciphertext files. For all tested lengths, \texttt{dec\_LEN.bin} matches \texttt{in\_LEN.bin} byte-for-byte, and ciphertext file size equals $\ell + 184$ bytes (152-byte header plus 32-byte final tag).

\input{test_vectors_table.tex}

\begin{figure}[t]
\centering
\includegraphics[width=0.92\linewidth]{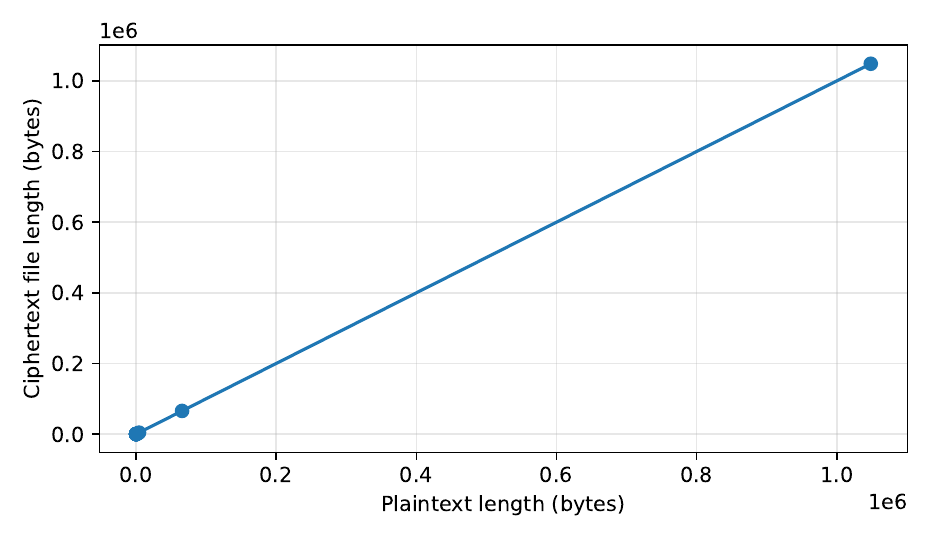}
\caption{Ciphertext file size versus plaintext length for test vectors. The overhead is constant: 152 bytes of header plus 32 bytes of final tag.}
\end{figure}

\section{Implementation Notes}

The reference code emphasizes robust engineering:
\begin{itemize}
\item Streaming encryption and decryption avoid loading whole files into memory.
\item Sensitive buffers are zeroed after use. Memory locking is attempted where available.
\item Output is written to a temporary file, \texttt{fsync} is used for durability, and an atomic rename reduces partial-output risk.
\item Tag comparisons are constant-time.
\end{itemize}

These details are essential because implementation errors are a common cause of practical cryptographic failures.

\section{Conclusion}

SymFrog-512 is a high-capacity duplex sponge AEAD built from a 1024-bit permutation and conservative parameters. This paper provides a specification aligned with the public reference implementation, a security analysis in the ideal permutation model, and empirical checks with regression artifacts and checksums. Further work includes public cryptanalysis of \textsf{P1024-v2}, tighter bounds specialized to the exact transcript structure used here, and performance evaluation across architectures.

\FloatBarrier
\appendix

\section{Reference Pseudocode}

This appendix provides compact pseudocode for reimplementation.

\subsection{\textsf{P1024-v2} round pseudocode}

\begin{lstlisting}[basicstyle=\ttfamily\small, frame=single]
Inputs: state S[0..15] of 64-bit words, round constant RC[r][0..7]

AddRoundConstants:
  for j = 0..7:
    S[8+j] ^= RC[r][j]

Mixer:
  for i = 0..7:
    S[i] ^= S[i+8]

Chi (per group of 4 words):
  for g = 0..3:
    a = S[4g+0]; b = S[4g+1]; c = S[4g+2]; d = S[4g+3]
    S[4g+0] ^= (~b) & c
    S[4g+1] ^= (~c) & d
    S[4g+2] ^= (~d) & a
    S[4g+3] ^= (~a) & b

Kick:
  for i in {0,2,...,14}:
    m = S[i] | 1
    S[i+1] ^= S[i] * m
  for i in {1,3,...,15}:
    m = S[i] | 1
    k = S[i] * (m ^ 0x9E3779B97F4A7C15)
    S[(i+1) mod 16] ^= ROTL(k,23)

Rotate and shuffle:
  for i = 0..15:
    S[i] = ROTL(S[i], (i even ? 19 : 61))
  S = (S[0], S[13], S[10], S[7], S[4], S[1], S[14], S[11],
       S[8], S[5],  S[2],  S[15], S[12], S[9], S[6],  S[3])
\end{lstlisting}

\subsection{AEAD encryption pseudocode}

\begin{lstlisting}[basicstyle=\ttfamily\small, frame=single]
Init(K,N):
  S[0..15] = LE64 words of K
  S[12..15] ^= LE64 words of N
  S[8..11] ^= ("SYMFROG-512-AEAD-v1", version word)
  S = P(S)

Absorb(DS, X):
  absorb X in 64-byte blocks into rate with 10*1 padding
  before each permutation: S[15] ^= DS; S = P(S)

Out(S):
  T[i] = S[i] ^ S[i+8] for i=0..7
  return SplitMixFinalizer(T) serialized to 64 bytes

Encrypt(AD, P):
  Init(K,N)
  Absorb(0xA0, AD)
  for each full 64-byte block:
    Z = Out(S)
    C = P ^ Z
    rate ^= C
    S[15] ^= 0xC0; S = P(S)
  final partial block:
    C_tail = P_tail ^ Out(S)[0..len-1]
    absorb ciphertext tail into rate with 10*1 padding
    S[15] ^= 0xC0; S = P(S)
  finalize:
    S[15] ^= 0xF0; S = P(S)
    T = Out(S)[0..31]
  output (C || T)
\end{lstlisting}

\section{Header Tag Pseudocode}

\begin{lstlisting}[basicstyle=\ttfamily\small, frame=single]
HeaderTag(K,N,AD,header_without_tag):
  S = Init(K,N)
  Absorb(0xB0, "SYMFROG-HDRTAG-v1" || header_without_tag || AD)
  S[15] ^= 0xB1; S = P(S)
  return Out(S)[0..31]
\end{lstlisting}

\section{Companion Archive Contents}

The companion archive includes:
\begin{itemize}
\item \texttt{main.tex} and figures under \texttt{fig/}
\item test vectors under \texttt{test\_vectors/}
\item the current \texttt{symfrog512.cpp} and \texttt{README.md} under \texttt{code/}
\end{itemize}

\section{Command Line Interface Reference}

The reference executable includes a small CLI intended for file encryption, decryption, hashing, benchmarking, and test-vector generation. The following block is taken from the built-in help text of the current reference implementation.

\begin{lstlisting}[basicstyle=\ttfamily\small, frame=single]
  symfrog512 --help
  symfrog512 --test-all
  symfrog512 --benchmark

Encrypt (AEAD):
  symfrog512 enc <in> <out> [--pass <pw> | --key-hex <hex1024>] [--ad <hex>] [--nonce-hex <hex256>] [--paranoid] [--quiet|-q]

Decrypt (AEAD):
  symfrog512 dec <in> <out> [--pass <pw> | --key-hex <hex1024>] [--ad <hex>] [--paranoid] [--quiet|-q]

Hash (FrogHash-512):
  symfrog512 hash <in> [--out <file>] [--quiet|-q]

Notes:
  --paranoid uses Argon2id SENSITIVE limits (slow, huge memory). Default is MODERATE.
  --quiet (or -q) suppresses non-error output.
  --ad is Additional Authenticated Data in hex (binds header + ciphertext).
  --nonce-hex is optional; if omitted, a random 256-bit nonce is generated.

Examples:
  symfrog512 enc secret.txt secret.syf --pass 'mypw' --ad 486561646572
  symfrog512 dec secret.syf secret.txt --pass 'mypw' --ad 486561646572
  symfrog512 hash secret.txt
\end{lstlisting}

\section{FrogHash-512 Mode}

In addition to the AEAD mode, the implementation exposes a simple sponge-based hash interface called \emph{FrogHash-512}. It uses the same 1024-bit permutation and the same rate/capacity split ($r=c=512$). At a high level:

\begin{itemize}
\item Initialize the state to all zeros, XOR a domain string (e.g., \texttt{SYMFROG-HASH-v1}) into the capacity, then apply one permutation.
\item Absorb the input file bytes in 64-byte blocks into the rate using the same $10^*1$ padding.
\item Squeeze 64 bytes from the output transform, optionally applying additional permutation calls if more output is requested.
\end{itemize}

This hash mode is included as a utility and as an additional test surface for the permutation. Like the AEAD analysis, any security claim for FrogHash-512 depends on the extent to which \textsf{P1024-v2} behaves like an ideal permutation.

\end{document}

%% file: test_vectors_table.tex
\begingroup\setlength{\tabcolsep}{0pt}\setlength{\LTpre}{0pt}\setlength{\LTpost}{0pt}\renewcommand{\arraystretch}{0.95}
\begin{longtable}{@{}r@{\hspace{6pt}}p{0.31\linewidth}@{\hspace{6pt}}p{0.31\linewidth}@{\hspace{6pt}}p{0.31\linewidth}@{}}
\caption{Regression artifacts produced by \texttt{--test-all} in the reference implementation (as provided in \texttt{testes.zip}). For every tested length, \texttt{dec\_LEN.bin} matches \texttt{in\_LEN.bin} byte-for-byte, and \texttt{enc\_LEN.syf} file size equals \(\texttt{LEN} + 184\) bytes (152-byte header plus 32-byte final tag).}\label{tab:testvectors}\\
\toprule
Len & SHA256(in\_LEN.bin) & SHA256(enc\_LEN.syf) & SHA256(dec\_LEN.bin) \\
\midrule
\endfirsthead
\toprule
Len & SHA256(in\_LEN.bin) & SHA256(enc\_LEN.syf) & SHA256(dec\_LEN.bin) \\
\midrule
\endhead
\midrule \multicolumn{4}{r}{\small Continued on next page} \\
\endfoot
\bottomrule
\endlastfoot
0 & \texttt{\tiny e3b0c44298fc1c149afbf4c8996fb924\allowbreak{}27ae41e4649b934ca495991b7852b855} & \texttt{\tiny 99c3cc8b8bec656959eef728414e6bce\allowbreak{}f0c16ad82c28c3ea623e7eace8fd519d} & \texttt{\tiny e3b0c44298fc1c149afbf4c8996fb924\allowbreak{}27ae41e4649b934ca495991b7852b855} \\
1 & \texttt{\tiny 8e35c2cd3bf6641bdb0e2050b76932cb\allowbreak{}b2e6034a0ddacc1d9bea82a6ba57f7cf} & \texttt{\tiny fd65c7c86c45e42c70e9c6a77d78dc35\allowbreak{}ea8c1cd2e5c903e8f7ea431e8793a6b0} & \texttt{\tiny 8e35c2cd3bf6641bdb0e2050b76932cb\allowbreak{}b2e6034a0ddacc1d9bea82a6ba57f7cf} \\
2 & \texttt{\tiny b07100151500cea2e51cd964a9d1ad38\allowbreak{}474d4fc37397302cfc2d9503b78fc627} & \texttt{\tiny 0487fa16898f458335f20f4256de58ec\allowbreak{}2702ecc817bdd563504db1dcceccf083} & \texttt{\tiny b07100151500cea2e51cd964a9d1ad38\allowbreak{}474d4fc37397302cfc2d9503b78fc627} \\
7 & \texttt{\tiny 494f2d94cca11a24dd3a99ecfd16042c\allowbreak{}4e0b73e68ac2eb442b0c309bbd8c4317} & \texttt{\tiny 1e315985ba88943f90d3afd6cabc11b8\allowbreak{}f5d0fb61ab770bbbf79e66a005952f44} & \texttt{\tiny 494f2d94cca11a24dd3a99ecfd16042c\allowbreak{}4e0b73e68ac2eb442b0c309bbd8c4317} \\
8 & \texttt{\tiny 65ce71f38d258aa63cb5d94564b4d902\allowbreak{}4f78a9f55a281cc10e4ebbe25fe4b26b} & \texttt{\tiny f7f5f72763fb0be206a51789252d7465\allowbreak{}b1b0b6ecdcb5d3cf9427131199014ff6} & \texttt{\tiny 65ce71f38d258aa63cb5d94564b4d902\allowbreak{}4f78a9f55a281cc10e4ebbe25fe4b26b} \\
15 & \texttt{\tiny 86f16743752818f6b54367d16142c9a0\allowbreak{}e01c22df6a01eab9942e49005467cab7} & \texttt{\tiny 8d213c12a256939f052ecc56f2d6300b\allowbreak{}b7cbb46ac514fff8b157c6fe1fbf47bd} & \texttt{\tiny 86f16743752818f6b54367d16142c9a0\allowbreak{}e01c22df6a01eab9942e49005467cab7} \\
16 & \texttt{\tiny 4ced425398e089de3d0339acc09d9bfa\allowbreak{}5e4352bf9bd6e30d29ad4777131fb4ea} & \texttt{\tiny e96fdbc46bf952462c54fbfb7f21c1a1\allowbreak{}b30290e43e7a1ed66e0d32039f9e46bc} & \texttt{\tiny 4ced425398e089de3d0339acc09d9bfa\allowbreak{}5e4352bf9bd6e30d29ad4777131fb4ea} \\
63 & \texttt{\tiny ee8654727d949ceabd1cc8c6ab03131f\allowbreak{}c14ca24319e9b69ff9aff3e8d0380660} & \texttt{\tiny dfb34e0f7316097ab96791ff5fd0137e\allowbreak{}bb3d2015b45db4133c2525ee6a8db53a} & \texttt{\tiny ee8654727d949ceabd1cc8c6ab03131f\allowbreak{}c14ca24319e9b69ff9aff3e8d0380660} \\
64 & \texttt{\tiny baf001defbd150024085a932380b34b4\allowbreak{}7c19d0ad6cba6342c87d188cbb46e050} & \texttt{\tiny 13e0ba4deb0fcfe0808edf7a1ac7d5ee\allowbreak{}4aedc49bd3b72bcad5512d655bb00af1} & \texttt{\tiny baf001defbd150024085a932380b34b4\allowbreak{}7c19d0ad6cba6342c87d188cbb46e050} \\
65 & \texttt{\tiny 11bb8329dee97c95e0046c668327d0c8\allowbreak{}821c8ee43844f2a59af0c999a6a80d49} & \texttt{\tiny 9ad4b766aaf13ecba6351a37ea21e0f8\allowbreak{}e003bcfd26cbac1a55f142ef875f4d41} & \texttt{\tiny 11bb8329dee97c95e0046c668327d0c8\allowbreak{}821c8ee43844f2a59af0c999a6a80d49} \\
127 & \texttt{\tiny a413b2926db962233d4e8065518821ad\allowbreak{}da013ea164ab02fc6a24d61faf94974f} & \texttt{\tiny 7df8a88496061348cdaa8aa1ee800ee9\allowbreak{}cf8a6a46193c65fdb743eac86e327afe} & \texttt{\tiny a413b2926db962233d4e8065518821ad\allowbreak{}da013ea164ab02fc6a24d61faf94974f} \\
128 & \texttt{\tiny da55b71156f323d3ea4bc9165866a784\allowbreak{}b6d6143718c96073ccf14bb2ffc75022} & \texttt{\tiny 37644572eff71c67bcd3e7cc3bb01150\allowbreak{}a3de1ce00765651ae86382c8f6e2a882} & \texttt{\tiny da55b71156f323d3ea4bc9165866a784\allowbreak{}b6d6143718c96073ccf14bb2ffc75022} \\
129 & \texttt{\tiny dde40808d399378fe2301e99679f1278\allowbreak{}528b5a9dbf23c3fbc676fa7f6dc30e1f} & \texttt{\tiny cafcf54172eee0e8a214beb3623aca89\allowbreak{}ac1255843230107e962768770df8b05e} & \texttt{\tiny dde40808d399378fe2301e99679f1278\allowbreak{}528b5a9dbf23c3fbc676fa7f6dc30e1f} \\
4096 & \texttt{\tiny fc6679fc6c965872eced96631d47d3a1\allowbreak{}a8e16e0dce69c5ea5bad905299a903df} & \texttt{\tiny 4bef6fca34042b6e352b0783c23eee20\allowbreak{}aa946adad704c0f7297ab610dc2509da} & \texttt{\tiny fc6679fc6c965872eced96631d47d3a1\allowbreak{}a8e16e0dce69c5ea5bad905299a903df} \\
65536 & \texttt{\tiny 6127f706e915404a2bc56bd8be27c4b7\allowbreak{}d638c690163bbc3249a98925cf997f96} & \texttt{\tiny 7fc73471e7e6f1d4c4f1365c7f1cb9bc\allowbreak{}dad27655b7057bf86c56ce73c5b7150e} & \texttt{\tiny 6127f706e915404a2bc56bd8be27c4b7\allowbreak{}d638c690163bbc3249a98925cf997f96} \\
65549 & \texttt{\tiny b6300afdd090056bb6ca8399dbbfe269\allowbreak{}31033170a3555b0752de08d8697aeab6} & \texttt{\tiny a542105d50a36b20a2f5a5e76e3f2316\allowbreak{}d3572d740ac09fc397940e7fba6aff92} & \texttt{\tiny b6300afdd090056bb6ca8399dbbfe269\allowbreak{}31033170a3555b0752de08d8697aeab6} \\
1048576 & \texttt{\tiny 2a684d85095eed4c726f5efd9a244c87\allowbreak{}3b89167b0033208ac1eacd9a1a8161c2} & \texttt{\tiny e1a234ba5084fde08a120b667a4c6d0d\allowbreak{}9ea4627cb6ef1598aac159c428a698b5} & \texttt{\tiny 2a684d85095eed4c726f5efd9a244c87\allowbreak{}3b89167b0033208ac1eacd9a1a8161c2} \\
1048583 & \texttt{\tiny 3357dd63eb743d03fd25d235fe116190\allowbreak{}1cc6664f47560f5a44caef979f1b7c98} & \texttt{\tiny 7eced6646529a3dbca96715a3a228f0e\allowbreak{}a9d09ba001f042ea466edd68a87389c8} & \texttt{\tiny 3357dd63eb743d03fd25d235fe116190\allowbreak{}1cc6664f47560f5a44caef979f1b7c98} \\
\end{longtable}
\endgroup